\documentclass[conference,onecolumn]{IEEEtran}

\usepackage[letterpaper,margin=0.75in,footskip=0.75in]{geometry}

\pdfminorversion=4

\makeatletter
\def\endthebibliography{%
  \def\@noitemerr{\@latex@warning{Empty `thebibliography' environment}}%
  \endlist
}
\makeatother

\IEEEoverridecommandlockouts

\usepackage{csquotes}
\usepackage{cite}
\usepackage[breaklinks=true]{hyperref}
\usepackage[cmex10]{amsmath}
\usepackage{amssymb}
\usepackage{amsthm}
\newtheorem{theorem}{Theorem}
\newtheorem{lemma}{Lemma}

\usepackage{mathtools}
\usepackage{commath}
\usepackage{dsfont}
\usepackage{comment}

\newtheoremstyle{break}
  {\topsep}{\topsep}%
  {\itshape}{}%
  {\bfseries}{}%
  {\newline}{}%
\theoremstyle{break}

\theoremstyle{definition}
\newtheorem{definition}{Definition}

\usepackage{titling}
\usepackage{tikz}
\usetikzlibrary{shapes}

\DeclareMathOperator*{\argmax}{arg\,max}
\DeclareMathOperator*{\argmin}{arg\,min}

\title{Incentive Compatibility in Two-Stage Repeated Stochastic Games}
\author{Bharadwaj Satchidanandan and Munther A. Dahleh}
\date{}

\begin{document}

\maketitle

\begin{abstract}
    We address mechanism design for two-stage repeated stochastic games --- a novel setting using which many emerging problems in next-generation electricity markets can be readily modeled. We introduce a new notion of equilibrium called Dominant Strategy Non-Bankrupting Equilibrium (DNBE) which requires players to make very little assumptions about the behavior of other players to employ their equilibrium strategy. Consequently, a mechanism that renders truth-telling a DNBE could be quite effective in molding real-world behavior along truthful lines. We present a mechanism for two-stage repeated stochastic games that renders truth-telling a Dominant Strategy Non-Bankrupting Equilibrium.
\end{abstract}

%\begin{IEEEkeywords}
%Stochastic games, Repeated games, Two-stage games, Dominant Strategy Non-Bankrupting Equilibrium, Incentive compatibility.
%\end{IEEEkeywords}

\section{Introduction}\label{introduction}

The power system is on the cusp of a revolution. The coming decade could witness increased renewable energy penetration, Electric Vehicle (EV) penetration, EV energy storage integration, demand response programs, etc. These changes have a profound impact on electricity market operations. New mechanisms must be devised to address a variety of important problems that are anticipated to arise in next-generation electricity markets. Most of the existing mechanism design settings are insufficient to model certain crucial features of these problems. To address this, we introduce the setting of Two-Stage Repeated Stochastic Games using which many problems that arise in the context of electricity markets can be readily modeled. The setting is an extension of the one-shot two-stage stochastic game introduced in \cite{Mukund2007} to repeated plays.

At a high level, a two-stage stochastic game, as the name suggests, consists of two stages. In the first stage, the players do not know their valuation functions precisely, but rather only know the probability distribution of their valuation functions. It is only in the second stage of the game that the valuation functions realize. However, the social planner cannot wait until the second stage to decide on the outcome. It is constrained to make certain decisions in the first stage itself based on the probability distribution bids of the players' valuation functions. Once the valuation functions realize and are reported in the second stage, the social planner can make corrections to the first stage outcome by taking certain recourse actions but this comes at a cost \cite{Mukund2007}. 
It is the stochasticity of the players' valuation functions and the prospect for them to misreport both the probability distribution \emph{and} the realization of their valuation functions that preclude the use of classical mechanism design techniques to design efficient and incentive-compatible mechanisms for this setting.

Motivated by applications to electricity markets which operate every day, we consider a setting wherein a two-stage stochastic game is played repeatedly. %It is well-known from classical results in game theory that the qualitative behavior of the players could change dramatically between a one-shot game and a repeated game. The extension of two-stage stochastic games that we present herein is no exception to this phenomenon. Electricity markets operate every day which makes the aspect of repeated playing important to model from the standpoint of real-world applications. 
Repeated playing affords the players a large class of strategies that adapt a player's actions to all past observations and inferences obtained therefrom. In other settings such as iterative auctions or dynamic games where a large strategy space of this sort manifests, it typically has an important implication for mechanism design: It may be impossible to obtain truth-telling as a dominant strategy equilibrium \cite{bergemann2019dynamic}. Consequently, in such scenarios, it is common to settle for mechanisms that render truth-telling only a Nash equilibrium, or variants thereof, even though Nash equilibria are known to be poor models of real-world behavior. This is owing to each player having to make overly specific assumptions about the behaviors of the other players in order for them to employ their Nash equilibrium strategy, which they may not make. In general, the lesser the burden of speculation in an equilibrium, the more plausible it is that it models real-world behavior.

Guided by the above maxim, we develop a new notion of equilibrium called the \emph{Dominant Strategy Non-Bankrupting Equilibrium (DNBE)} that requires players to make very little assumptions about the behaviors of the other players for them to employ their equilibrium strategy. Specifically, the only assumption that the players are required to make to play their DNBE strategy is that no player employs a strategy that leads to their own bankruptcy. We make this more precise in Section \ref{problemFormulation}. That the assumption is mild in that it is quite likely to hold in practice needs no belaboring. Consequently, a mechanism that implements a certain desired behavior as a DNBE as opposed to only a Nash equilibrium could be quite effective in molding real-world behavior along the desired lines. 

We then present a mechanism for two-stage repeated stochastic games that renders truth-telling a dominant strategy non-bankrupting equilibrium. The mechanism is individually rational in that every player is guaranteed to accrue a nonnegative utility by truth-telling regardless of what strategies the other players employ. Finally, if every player bids truthfully, then the outcome that the mechanism produces maximizes social welfare. The mechanism is a generalization of the mechanism that we have developed in \cite{satchidanandan2020efficient} for energy storage markets.  

Finally, we apply the mechanism to design an efficient and incentive-compatible demand response market. There are two main takeaways that we wish to highlight for designers of next-generation electricity markets. The first is that there is a need to redesign the ``bidding language" of the day-ahead market. In today's electricity markets, the generators and loads bid their supply and demand functions respectively. However, with the inclusion of demand response providers who may not know exactly in the day-ahead market their ability to reduce consumption the following day, the day-ahead market should allow for bids that are only \emph{probabilistic} in nature. It is only in real time, if and when called upon for demand response, that the demand response providers should be required to disclose their actual costs for curtailing consumption. The theory developed in the paper allows for such probabilistic bids to be submitted to the system operator. Secondly, our results show that ``simple" mechanisms like making payments proportional to the power curtailed by demand response providers, which have been employed in previous demand response trials, are incapable of attaining the optimal social welfare. Significant welfare gains can be obtained by employing carefully-designed mechanisms that take into account the uncertainties of the market participants.

The rest of the paper is organized as follows. Section \ref{problemFormulation} begins with a precise description of a two-stage repeated stochastic game, defines the notion of dominant strategy non-bankrupting equilibrium, and formulates the mechanism design problem. Section \ref{mechanismDesign} develops a mechanism for two-stage repeated stochastic games that guarantees truth-telling to be a dominant strategy non-bankrupting equilibrium. 
Section \ref{applications} describes the application of the results to the design of demand response markets. 
Section \ref{relatedWork} provides an account of related work. Section \ref{conclusion} concludes the paper. 

\noindent\textbf{Notation:} Vectors and sequences are denoted using boldface letters. Given a sequence $\mathbf{x}=\{x(1),x(2),\hdots\},$ we denote by $\mathbf{x}^l$ the segment $\{x(1),\hdots,x(l)\}.$ The hat notation is used to denote bids: Given a variable $x$ that is private to a player, we denote by $\widehat{x}$ the bid that the player submits for $x.$

%The goal of each player is to induce the social planner to choose a first-stage decision that maximizes its expected valuation and a recourse action that maximizes its valuation given the realization of its valuation function, whereas the social planner's objective is to maximize the. 

\section{Problem Formulation}\label{problemFormulation}
A two-stage stochastic game played by $n$ players and consisting of a social planner is described by
\begin{enumerate}
    \item a publicly-known set $\Delta$ known as the type space of the players,
    \item a publicly-known set $\Theta$ of probability mass functions over $\Delta$, known as the supertype space of the players, 
    \item for each $i\in\{1,\hdots,n\},$ a probability distribution $\theta_i\in\Theta$, known as player $i$'s supertype, that is privately known to player $i$ in the first stage of the game, and which it is supposed to report to the social planner in the first stage, 
    %\item for each $i\in\{1,\hdots,n\},$ a probability distribution $\widehat{\theta}_i\in\Theta$, known as player $i$'s supertype bid, that player $i$ reports to the social planner in the first stage of the game,
    \item a set $\mathcal{O}_1$ of first-stage outcomes,% that the social planner can choose in the first stage of the game,
    \item a first-stage decision rule $g^*_1:\Theta^n\to\mathcal{O}_1$ according to which the social planner chooses the first-stage outcome as a function of the players' supertype bids, 
    \item for each $i\in\{1,\hdots,n\},$ player $i$'s type $\delta_i\in\Delta$ that is ``drawn by nature" at random according to $\theta_i$, whose realization is privately observed by player $i$ in the second stage of the game, and which it is supposed to report to the social planner in the second stage,
    \item a set $\mathcal{O}_2$ of second-stage outcomes or ``recourse actions" that the social planner can choose,% in the second stage of the game based on the players' type and supertype bids, 
    \item a second-stage decision rule $g^*_2:\Theta^n\times\Delta^n\to\mathcal{O}_2$ according to which the social planner chooses the second-stage outcome as a function of the players' type and supertype bids,
    \item a cost function $c:\mathcal{O}_1\times\mathcal{O}_2\to\mathbb{R}$ that specifies for every $(o_1,o_2)\in\mathcal{O}_1\times\mathcal{O}_2,$ the cost incurred by the social planner for choosing the outcome $o_1$ in the first stage and taking the recourse action $o_2$ in the second stage,
    \item for each $i\in\{1,\hdots,n\},$ a valuation function $v_i:\Delta\times\mathcal{O}_1\times\mathcal{O}_2\to\mathbb{R}$ of player $i$ that specifies for every $\delta_i\in\Delta$ and every $(o_1,o_2)\in\mathcal{O}_1\times\mathcal{O}_2,$ the valuation of player $i$ if its type is $\delta_i$ and the social planner chooses the outcomes $o_1$ and $o_2$ in the first and the second stage of the game respectively. 
    %\item a social welfare function $W(g^*_1,g^*_2)\coloneqq\mathbb{E}_{\boldsymbol{\delta}\sim\theta_1\times\hdots\times\theta_n}\big[\sum_{i=1}^nv_i(\delta_i,g^*_1(\boldsymbol{\widehat{\theta}}),g^*_2(\boldsymbol{\widehat{\theta}},\boldsymbol{\delta}))-c(g^*_1(\boldsymbol{\widehat{\theta}}),g^*_2(\boldsymbol{\widehat{\theta}},\boldsymbol{\delta}))\big].$
\end{enumerate}

The first- and second-stage decision rules $(g_1^*,g_2^*)$ that we consider are those that maximize the expected social welfare. To elaborate, %the social welfare that results if the players' types are $\boldsymbol{\delta}\coloneqq[\delta_1,\hdots,\delta_n]$ and the outcome is $(o_1,o_2)$ is  $$SW(\boldsymbol{\delta},o_1,o_2)\coloneqq\sum_{i=1}^nv_i(\delta_i,o_1,o_2)-c(o_1,o_2).$$
let $g_1:\Theta^n\to\mathcal{O}_1$ be any first-stage decision rule and $g_2:\Theta^n\times\Delta^n\to\mathcal{O}_2$ be any second-stage decision rule. If the players bid their types and supertypes truthfully, then the expected social welfare that results as a consequence of using the decision rule $(g_1,g_2)$ is $$\mathbb{E}_{\boldsymbol{\delta}\sim\boldsymbol{\theta}}\big[\sum_{i=1}^nv_i(\delta_i,g_1(\boldsymbol{\theta}),g_2(\boldsymbol{\theta},\boldsymbol{\delta}))-c(g_1(\boldsymbol{\theta}),g_2(\boldsymbol{\theta},\boldsymbol{\delta}))\big]=:{W}(\boldsymbol{\theta},g_1,g_2).$$
%$$\mathbb{E}_{\boldsymbol{\delta}\sim\boldsymbol{\theta}}\big[SW(\boldsymbol{\delta},g_1(\boldsymbol{\theta}),g_2(\boldsymbol{\theta},\boldsymbol{\delta}))\big]=:{W}(\boldsymbol{\theta},g_1,g_2).$$ 
The goal of the social planner is to maximize the expected social welfare, and so the decision rule $(g_1^*,g_2^*)$ that it employs is
\begin{align}
    (g_1^*,g_2^*)=\argmax_{g_1,g_2}\;{W}(\cdot,g_1,g_2),\label{gStarDefn}
\end{align}
where the maximization is defined in the pointwise sense. 
%I.e., we say that $(g_1^*,g_2^*)$ is a solution to (\ref{gStarDefn}) if for every $(g_1,g_2),$ it holds that $W(\boldsymbol{\theta},g_1^*,g_2^*)\geq W(\boldsymbol{\theta},g_1,g_2)$ for every $\boldsymbol{\theta}.$ That such a solution to (\ref{gStarDefn}) exists can be established in a straightforward manner. We define the function $W^*:\Theta^n\to\mathbb{R}$ by $$W^*(\boldsymbol{\theta})\coloneqq W(\boldsymbol{\theta},g_1^*,g_2^*).$$ 
The social planner computes $g_1^*$ and $g_2^*$ and announces it to the players before the game commences.

%The social planner's difficulty in computing $(g^*_1(\boldsymbol{\theta}),g_2^*(\boldsymbol{\theta}))$ is that it does not know the true supertypes $\boldsymbol{\theta}$ of the players. Moreover, even if the players bid their supertypes truthfully, if the reported type realizations $\widehat{\boldsymbol{\delta}}$ do not equal the true type realizations $\boldsymbol{\delta},$ then the resulting social welfare $W(\boldsymbol{\delta},g_1^*(\boldsymbol{\theta}),g_2^*(\boldsymbol{\theta},\boldsymbol{\widehat{\delta}}))$ need not be equal to the optimal expected social welfare $W^*(\boldsymbol{\theta})$ in expectation. We will return to this issue later.

The problem that we study is one where a two-stage stochastic game of the above form is played repeatedly on each day $l,$ $l\in\mathbb{Z}_+.$ For ease of exposition, we assume that the supertypes of the players remain the same on all days and it is only their types that differ across days, though this assumption can be relaxed in a straightforward manner. Consequently, for each player $i$, $i\in\{1,\hdots,n\},$ we denote by $\theta_i$ its privately known supertype which remains the same on all days and by $\delta_i(l)$ its privately known type on day $l.$ The sequence $\{\boldsymbol{\delta}(1),\boldsymbol{\delta}(2),\hdots\}$ is assumed to be Independent and Identically Distributed (IID) with $\boldsymbol{\delta}(1)\sim\theta_1\times\hdots\times\theta_n.$

\subsection{First-stage strategy}

On each day $l,$ each player $i$ is required to report its supertype to the social planner in the first stage so that the latter can compute the optimal first-stage outcome. Since the players' supertypes are assumed to remain the same on all days, it suffices for the players to bid their supertypes just once, namely, in the first stage of the game on day $1.$ Owing to strategic reasons that will be clear shortly, the players may not bid their supertypes truthfully, and so we denote by ${\widehat{\theta}_i}$ the supertype bid of player $i$ and by $\sigma_i:\Theta\to\Theta$ the first-stage strategy according to which player $i$ constructs its supertype bid. Therefore, $\widehat{\theta}_i=\sigma_i(\theta_i).$ Once all players submit their supertype bids, the social planner computes the first-stage outcome as $g_1^*(\boldsymbol{\sigma}(\boldsymbol{\theta})),$ where $\boldsymbol{\sigma}(\boldsymbol{\theta})\coloneqq[\sigma_1(\theta_1),\hdots,\sigma_n(\theta_n)].$ %Since the players' supertypes are assumed to remain the same on all days, the first-stage outcome also remains the same on all days. 
The game then proceeds to the second stage. 

\subsection{Second-stage bidding policy}

In the second stage on each day $l$, each player $i$ observes the realization of $\delta_i(l)$ which it is supposed to report to the social planner. However, owing to strategic reasons that will become clear shortly, the players may not bid their type realizations truthfully, and so we denote by $\widehat{\delta}_i(l)$ player $i$'s type bid on day $l.$ We allow for the player to construct its type bid on any day $l$ using all information available to it until day $l$, and in accordance with any randomized, history-dependent policy of its choosing. 
%To elaborate, in the second stage on day $l,$ $l\in\mathbb{Z}_+,$ the information available to player $i,$ $i\in\{1,\hdots,n\}$ is (i) its own type realizations $\delta_i^l$ until day $l,$ (ii) its own type bids $\widehat{\delta}_i^{l-1}$ until the previous day, (iii) the sequence of second stage outcomes $o_2^{l-1}$ until the previous day, (iv) its supertype $\theta_i$ and supertype bid $\widehat{\theta}_i$, and (v) the first-stage outcome $o_1$. The second-stage bidding policy is a rule that specifies how the player should map all of this information to construct its supertype bid on day $l.$ 
Specifically, a second-stage bidding policy $\mu$ of player $i$ is a rule which specifies for each $o_1\in\mathcal{O}_1$ and each $l\in\mathbb{Z}_+,$ a probability transition kernel $\mathbb{P}_{\mu}(\widehat{\delta}_i(l)\big\vert\delta_i^l,\widehat{\delta}_i^{l-1},o_2^{l-1};o_1)$ according to which player $i$ constructs its second-stage bid $\widehat{\delta}_i(l)$ on day $l$ if the first-stage outcome is $o_1$. %Note that the type bid of any player $i$ on day $l$ can be a function of all information that is available to it until then, namely, the first stage outcome $o_1$, the sequence $o_2^{l-1}$ of second stage outcomes until day $l-1$, the sequence of its own type realizations until day $l,$ and the sequence of type bids that it reports to the social planner until day $l-1$. 
We denote by $\Pi_i$ the set of all second-stage bidding policies available to player $i.$ 

Note that the second stage bidding policy is a \emph{rule} that maps the history of observations available to a player to its second-stage bid. While the outcome of the rule is random owing to the types and second-stage outcomes being random, there is nothing random about the rule itself. Consequently, a player without any loss of generality can choose its second-stage bidding policy right on day $1$ as a function of its supertype. This leads to the notion of the second-stage strategy which is described next. 

\subsection{Second-stage strategy}
A \emph{second-stage strategy} of player $i$ is a function $\pi_i:\Theta\to\Pi_i$ which specifies the second-stage bidding policy that it employs as a function of its private supertype $\theta_i.$ Therefore, $\pi_i(\theta_i)$ is the second-stage bidding policy employed by player $i.$ 

%Note that the second-stage bidding policy $\pi_{i}(\theta_i)$ is a function that maps all information available to player $i$ until the second stage of day $l$ to a probability distribution over $\Delta$ according to which player $i$ chooses its second-stage bid on day $l$. The information includes (i) its supertype $\theta_i,$ (ii) the first-stage outcome $o_1,$ (iii) the sequence $o_2^{l-1}$ of second stage outcomes until day $l-1,$ and (iv) the type realizations $\boldsymbol{\delta}_i^l$ of player $i$ until day $l$.% (iv) the type bids $\widehat{\boldsymbol{\delta}}_i^{l-1}$ until day $l-1$.

Once all players submit their type bids, the social planner computes the second-stage outcome for day $l$ as $o_2(l)=g_2^*(\boldsymbol{\sigma}(\boldsymbol{\theta}),\widehat{\boldsymbol{\delta}}(l))$. Note that once the players' first-stage and second-stage bidding policies are fixed, a functional relationship is established between the types and the type bids, and all random variables become well-defined. 

%Any second-stage strategy $\pi_i(\theta_i)$ of player $i$ induces a joint probability distribution on the pair $(\delta_i(l),\widehat{\delta}_i(l))$ for every $l\in\mathbb{Z}_+$. We denote this joint distribution by $\pi_i^J(\theta_i).$ Specifically, $\mathbb{P}(\delta_i(l)=t,\widehat{\delta}_i(l)=\widehat{t})=\theta_i(t)\pi_{i,l}(g_1^*(\boldsymbol{\sigma}(\boldsymbol{\theta})),;\theta_i)$

\subsection{Strategies and Strategy profiles}

We refer to the composition of the first- and second-stage strategies simply as a \emph{strategy}. I.e, $S_i\coloneqq(\sigma_i,{\pi}_i)$ is referred to as the {strategy} of player $i.$ We denote by $\Lambda_i$ the set of strategies available to player $i.$ Finally, we refer to $\boldsymbol{S}\coloneqq(S_1,\hdots,S_n)$ as the \emph{strategy profile} of the players and denote by $\Lambda$ the set of strategy profiles $\Lambda_1\times\hdots\times\Lambda_n$. 

\subsection{Truthful strategies}

The stochasticity of the player types necessitates the definition of truthful strategy to be weaker than requiring a player to bid its type truthfully on all days. 
\begin{definition}
A strategy $S_i=(\sigma_i,{\pi}_i)$ of player $i$, $i\in\{1,\hdots,n\},$ is \emph{truthful} if 
\begin{enumerate}
    \item[(i)] $\sigma_i(\theta)=\theta$ for every $\theta\in\Theta,$ and 
    \item[(ii)] for every $\theta\in\Theta$ and every $o_1\in\mathcal{O}_1,$ there exists $\mathcal{L}\subseteq\mathbb{Z}_+$ with $\lim_{L\to\infty}\frac{1}{L}\sum_{l=1}^L\mathds{1}_{\{l\in\mathcal{L}\}}=0$ such that for all $l\notin\mathcal{L},$ $$\mathbb{P}_{\pi_i(\theta)}(\widehat{\delta}_i(l)\big\vert\delta_i^l,\widehat{\delta}_i^{l-1},o_2^{l-1};o_1)=\mathds{1}_{\{\widehat{\delta}_i(l)=\delta_i(l)\}}.$$ 
    \end{enumerate}
    
    A strategy profile $(S_1,\hdots,S_n)$ is a \emph{truthful strategy profile} if $S_i$ is truthful for every $i\in\{1,\hdots,n\}.$ 
\end{definition}
\noindent In other words, a strategy $S_i$ is truthful if the supertype bid is truthful and the type bid is truthful ``almost all days." We denote by $\mathcal{T}_i\subset\Lambda_i$ the set of all truthful strategies available to player $i.$

\subsection{Payments and utilities}
The social planner collects a payment from each player at the end of each day that is determined as a function of the bids that they submit until that day. We denote by $p_{i,l}:\Theta_1\times\hdots\times\Theta_n\times\Delta_1^l\times\hdots\times\Delta_n^l\to\mathbb{R}$ the payment rule so that $p_{i,l}(\boldsymbol{\widehat{\theta}},\widehat{\boldsymbol{\delta}}^l)$ specifies the amount that player $i$ should pay on day $l$. The utility accrued by player $i$ is defined as
\begin{align}
    u_{i}(S_i,\boldsymbol{S}_{-i},\boldsymbol{\theta},\boldsymbol{\delta}^\infty)\coloneqq\bigg[\liminf_{L\to\infty}\frac{1}{L}\sum_{l=1}^Lv_i(\delta_i(l),g_1^*(\boldsymbol{\widehat{\theta}}),g_2^*(\boldsymbol{\widehat{\theta}},\boldsymbol{\widehat{\delta}}(l)))-p_{i,l}(\boldsymbol{\widehat{\theta}},\boldsymbol{\widehat{\delta}}^l)\bigg].\label{uiAsymptotic}
\end{align}
Note that a player's utility is a random variable that depends on the realization of the type sequence $\boldsymbol{\delta}^\infty.$

\subsection{Non-Bankrupting strategies} 
As mentioned in Section \ref{introduction}, a ``mild" behavioral assumption, one that is quite likely to hold in practice, is that no player behaves in a manner that might result in its own bankruptcy. This is captured by the notion of a non-bankrupting strategy.

\begin{definition}
A strategy $S_i$ of player $i,$ $i\in\{1,\hdots,n\},$ is  \emph{non-bankrupting} if for all $(\boldsymbol{S}_{-i},\boldsymbol{\theta}),$ $$u_i(S_i,\boldsymbol{S}_{-i},\boldsymbol{\theta},\boldsymbol{\delta}^\infty)>-\infty$$ for all $\boldsymbol{\delta}^\infty,$ except perhaps on a set of probability zero. 

A strategy profile $\boldsymbol{S}=(S_1,\hdots,S_n)$ is \emph{non-bankrupting} if $S_i$ is non-bankrupting for every $i\in\{1,\hdots,n\}.$
\end{definition}
\noindent We denote by $\mathcal{NB}_i$ the set of non-bankrupting strategies of player $i,$ by $\mathcal{NB}_{-i}$ the set of non-bankrupting strategy profiles of all players except player $i,$ and by $\mathcal{NB}$ the set of non-bankrupting strategy profiles of all players.

\subsection{Dominant Strategy Non-Bankrupting Equilibrium}
We are now ready to introduce a notion of equilibrium that is ``slightly" weaker than dominant strategy equilibrium.
\begin{definition}
A strategy profile $\boldsymbol{S}=(S_1,\hdots,S_n)\in\mathcal{NB}$ is a \emph{Dominant Strategy Non-Bankrupting Equilibrium (DNBE)} if for all $i\in\{1,\hdots,n\},$ all $S'_{-i}\in\mathcal{NB}_{-i},$ all $S_i'\in\Lambda_i,$ and all $\boldsymbol{\theta},$
\begin{align}
    u_i(S_i,S'_{-i},\boldsymbol{\theta},\boldsymbol{\delta}^\infty)\geq u_i(S'_i,S'_{-i},\boldsymbol{\theta},\boldsymbol{\delta}^\infty)
\end{align}
for all $\boldsymbol{\delta}^\infty,$ except perhaps on a set of probability zero.
%\begin{enumerate}
 %   \item[i)] $\boldsymbol{S}\in\mathcal{NB},$ and 
  %  \item[ii)] for all $i\in\{1,\hdots,n\},$ all $S'_{-i}\in\mathcal{NB}_{-i},$ and all $\boldsymbol{\theta},$
%\begin{align}
 %   u_i(S_i,S'_{-i},\boldsymbol{\theta})\geq u_i(S'_i,S'_{-i},\boldsymbol{\theta}).
%\end{align}
%\end{enumerate}
\end{definition}

It is perhaps instructive to contrast DNBE with Dominant Strategy Equilibrium (DSE) and Nash Equilibrium (NE) to gain a better appreciation of the notion. Note that for a strategy profile $\boldsymbol{S}$ to form a Nash equilibrium, it must hold for every $i\in\{1,\hdots,n\}$ that $S_i$ is a best response to $\boldsymbol{S}_{-i}.$ On the other hand, for the strategy profile $\boldsymbol{S}$ to form a DNBE, we must have for all $i\in\{1,\hdots,n\}$ that $S_i$ is a best response not only to $\boldsymbol{S}_{-i}$, but also to all $\boldsymbol{S}'_{-i}\in\mathcal{NB}_{-i}.$ It follows that any dominant strategy non-bankrupting equilibrium is also a Nash equilibrium but not vice-versa. The stronger notion of dominant strategy equilibrium requires for all $i\in\{1,\hdots,n\}$ that $S_i$ is a best response to every $\boldsymbol{S}'_{-i}\in\Lambda_{-i},$ and not just to those in $\mathcal{NB}_{-i}$ as required by DNBE. Hence, any dominant strategy equilibrium is also a dominant strategy non-bankrupting equilibrium. Fig. \ref{figHierarchy} illustrates the hierarchy formed by these equilibrium notions. 

%In summary, we have the following hierarchy of the aforementioned equilibrium notions: $\mathcal{S}_{DSE}\subseteq\mathcal{S}_{DNBE}\subseteq\mathcal{S}_{NE}$ where $S_{DSE}$ denotes the set of strategies that form a dominant strategy equilibrium, and $\mathcal{S}_{DNBE}$ and $\mathcal{S}_{NE}$ are defined likewise. 

\begin{figure}
\centering
%\begin{adjustbox}{width=\columnwidth}
\begin{tikzpicture}
%\draw[step=1cm,gray,very thin] (0,0) grid (5,5);

\draw[black, thick] (0,0) rectangle (5,5);

\draw[black, thick] (2.5,2.5) ellipse (2.25cm and 1.6cm);
\draw[black, thick] (2.5,2.5) ellipse (1.7cm and 1cm);
\draw[black, thick] (2.5,2.5) ellipse (1cm and 0.4cm);

\node[align=center] at (2.5,2.5) {\text{DSE}};
\node[align=center] at (2.5,3.15) {\text{DNBE}};
\node[align=center] at (2.5,3.75) {\text{NE}};
\node[align=center] at (2.5,4.5) {\text{Set of Strategy Profiles}};

\end{tikzpicture}
%\end{adjustbox}
\caption{Hierarchy of equilibrium notions. Any dominant strategy equilibrium is also a dominant strategy non-bankrupting equilibrium, and any dominant strategy non-bankrupting equilibrium is also a Nash equilibrium. }\label{figHierarchy}
\end{figure}
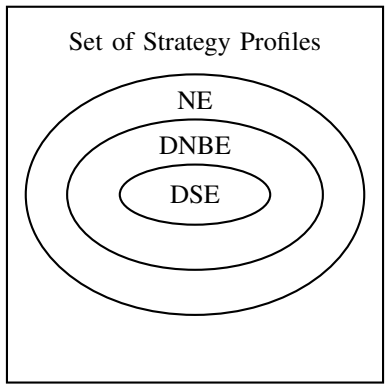

\subsection{Mechanism Design Problem}
Arbitrarily fix the strategy profile $\boldsymbol{S}$ of the players. The long-term average social welfare that results from the game is
\begin{align}
    q(\boldsymbol{S},\boldsymbol{\theta},\boldsymbol{\delta}^\infty)\coloneqq\liminf_{L\to\infty}\frac{1}{L}\sum_{l=1}^L\bigg[\sum_{i=1}^nv_i(\delta_i(l),g_1^*(\boldsymbol{\widehat{\theta}}),g_2^*(\boldsymbol{\widehat{\theta}},\boldsymbol{\widehat{\delta}(l)}))\bigg]-c(g_1^*(\boldsymbol{\widehat{\theta}}),g_2^*(\boldsymbol{\widehat{\theta}},\boldsymbol{\widehat{\delta}}(l))).\label{SWasymptotic}
\end{align}
The objective of the social planner is to ensure that the average social welfare $q(\boldsymbol{S},\boldsymbol{\theta},\boldsymbol{\delta}^\infty)$ equals the optimal value $W^*(\boldsymbol{\theta})$ that would result almost surely if all players employ a truthful strategy. However, the objective of each player $i$ is to maximize its own utility given by (\ref{uiAsymptotic}), and so it may not employ a truthful strategy if there is a possibility for it to accrue a higher utility by doing so than by employing a truthful strategy. 
%I.e., player $i$ may not employ a truthful strategy $T_i$ if there exist $S_i\in\Lambda_i\setminus\mathcal{T}_i$ and $(\boldsymbol{S}_{-i},\boldsymbol{\theta})$ such that $u_i(S_i,\boldsymbol{S}_{-i},\boldsymbol{\theta},\boldsymbol{\delta}^\infty)>u_i(T_i,\boldsymbol{S}_{-i},\boldsymbol{\theta},\boldsymbol{\delta}^\infty)$ with some nonzero probability. 
This brings us to the mechanism design problem. We wish to design a payment rule $\{p_{i,l}: (i,l)\in\{1,\hdots,n\}\times\mathbb{Z}_+\}$ such that each player employing a truthful strategy is a Dominant Strategy Non-Bankrupting Equilibrium. The next section develops the mechanism and establishes the incentive and efficiency properties guaranteed by it.

%However, the objective of each player $i$ is to maximize its own utility $u_i(S_i,S_{-i},\boldsymbol{\theta}),$ and so player $i$ may not employ a truthful strategy if there exists $(S_{-i},\boldsymbol{\theta})$ such that $$u_i(S_i,S_{-i},\boldsymbol{\theta})>u_i(T_i,S_{-i},\boldsymbol{\theta})$$ for some $S_i\notin\mathcal{T}_i$ and all $T_i\in\mathcal{T}_i$. This brings us to the central problem that is addressed in the paper. Specifically, we wish to design a payment rule $\{p_{i,l}: (i,l)\in\{1,\hdots,n\}\times\mathbb{Z}_+\}$ such that for every $i\in\{1,\hdots,n\}$, every $\boldsymbol{\theta},$ and every non-bankrupting strategy profile $S_{-i}\in\mathcal{NB}_{-i},$ $$u_i(S_i,S_{-i},\boldsymbol{\theta})\geq u_i(S'_i,S_{-i},\boldsymbol{\theta})$$ for every $S_i\in\mathcal{T}_i$ and $S'_i\in\Lambda_i.$ Note that this notion of incentive compatibility is slightly weaker than that of strategic dominance but stronger than ex post incentive compatibility with truthful strategy profile constituting a Nash Equilibrium. The following section develops the mechanism and establishes the incentive and efficiency properties guaranteed by it. 

\section{An Efficient and Incentive-Compatible Mechanism for Two-Stage Repeated Stochastic Games}\label{mechanismDesign}

%\subsection{Payment functions}
For each $i\in\{1,\hdots,n\},$ the payment of player $i$ on any day $l$ consists of two components $p_i^{F}$ and $p_i^{S}$ that can be computed by the social planner at the end of the first and the second stages of the game respectively on day $l.$ These payment functions are defined next.

\subsection{First-stage payment}
The first-stage payment $p_i^F$ is a function of only the first-stage bids of the players. Since these quantities remain the same on all days, so do the first-stage payments. The first-stage payment is simply the VCG payment and is defined as
\begin{align}
    p_i^F(\boldsymbol{\widehat{\theta}})\coloneqq W^*(\boldsymbol{\widehat{\theta}}_{-i})-\mathbb{E}_{\boldsymbol{\delta}\sim\mathbb{P}_{\boldsymbol{\widehat{\theta}}}}\bigg[\sum_{j\neq i}v_j(\delta_j,g_1^*(\boldsymbol{\widehat{\theta}}),g_2^*(\boldsymbol{\widehat{\theta}},\boldsymbol{\delta}))-c(g_1^*(\boldsymbol{\widehat{\theta}}),g_2^*(\boldsymbol{\widehat{\theta}},\boldsymbol{\delta}))\bigg],\label{piF}
\end{align}
where $\widehat{\boldsymbol{\theta}}_{-i}$ denotes the supertype bids of all players other than player $i$.% and $W^*(\widehat{\boldsymbol{\theta}}_{-i})$ denotes the optimal social welfare that will result if player $i$ were absent and the supertypes of the other players were $\widehat{\boldsymbol{\theta}}_{-i}.$

\subsection{Second-stage payment}
At a high level, the first functionality of the second-stage payment is to bind the first-stage and the second-stage strategies of the players. To achieve this, the second-stage payment rule compares the empirical frequencies of the players' type bids with their supertype bids and penalizes discrepancies between them. To elaborate, denote by $\widehat{\theta}_i(t)$ the probability that a random variable distributed according to $\widehat{\theta}_i$ takes the value $t,$ $t\in\Delta.$ On each day $l$ and for each player $i,$ $(l,i)\in\mathbb{Z}_+\times\{1,\hdots,n\},$ the second-stage payment rule computes the discrepancy 
\begin{align}
    \widehat{f}_{i,t}(l)\coloneqq\bigg[\frac{1}{l}\sum_{l'=1}^l\mathds{1}_{\{\widehat{\delta}_i(l')=t\}}\bigg]-\widehat{\theta}_i(t)\label{fhatDefn}
\end{align}
for every $t\in\Delta$, and imposes a penalty of $J_p(l)$ on player $i$ if $\widehat{f}_{i,t}(l)$ falls outside a window of size $r(l)$ for some $t\in\Delta,$ i.e., if
\begin{align}
    \vert\widehat{f}_{i,t}(l)\vert\geq r(l)\label{constraint1}
\end{align}
for some $t\in\Delta.$ 
%How the window sequence $\{r(l)\}$ and the penalty sequence $\{J_p(l)\}$ is designed will be described shortly. 

In a setting of repeated playing, the sequence of second-stage outcomes serves as a source of common randomness which the players can potentially use to correlate their second-stage bids if there is a possibility for them to accrue a higher utility by doing so than by fabricating their bids independently of the other players' bids. The second functionality of the second-stage payment is to disincentivize such strategies. Towards this end, on each day $l$ and for each player $i$, $(l,i)\in\mathbb{Z}_+\times\{1,\hdots,n\},$ the second-stage payment rule computes
%\begin{align}
 %   \widehat{h}_{i,\mathbf{d}}(l)\coloneqq\bigg[\frac{1}{l}\sum_{l'=1}^l\mathds{1}_{\{\widehat{\delta}_i(l')=d_i,\widehat{\boldsymbol{\delta}}_{-i}(l')=\mathbf{d}_{-i}\}}\bigg]-\bigg[\frac{1}{l}\sum_{l'=1}^l\mathds{1}_{\{\widehat{\delta}_i(l')=d_i\}}\bigg]\bigg[\frac{1}{l}\sum_{l'=1}^l\mathds{1}_{\{\widehat{\boldsymbol{\delta}}_{-i}(l')=\mathbf{d}_{-i}\}}\bigg]\label{hhatDefn}
%\end{align}
\begin{align}
    \widehat{h}_{i,\mathbf{d}}(l)\coloneqq\bigg[\frac{1}{l}\sum_{l'=1}^l\mathds{1}_{\{\widehat{\delta}_i(l')=d_i,\widehat{\boldsymbol{\delta}}_{-i}(l')=\mathbf{d}_{-i}\}}\bigg]-\bigg[\widehat{\theta}_i(d_i)\bigg]\bigg[\frac{1}{l}\sum_{l'=1}^l\mathds{1}_{\{\widehat{\boldsymbol{\delta}}_{-i}(l')=\mathbf{d}_{-i}\}}\bigg]\label{hhatDefn}
\end{align}
for every $\mathbf{d}\in\Delta^{n},$ and imposes a penalty of $J_p(l)$ on player $i$ if it falls outside a window of size $r(l)$ for some $\mathbf{d}\in\Delta^n,$ i.e., if
\begin{align}
    \vert\widehat{h}_{i,\mathbf{d}}(l)\vert\geq r(l)\label{constraint2}
\end{align}
for some $\mathbf{d}\in\Delta^n.$ %Note that if player $i$ employs a truthful strategy, then (\ref{hhatDefn}) approaches zero almost surely for every $\mathbf{d}\in\Delta^n.$

How should the window size sequence $\{r\}$ be chosen? On the one hand, the window size $r(l)$ must tend to zero as $l$ tends to infinity for otherwise, the set of sequences $\{\boldsymbol{\widehat{\delta}}\}$ that satisfy (\ref{constraint1}) and (\ref{constraint2}) would be ``large," thereby violating incentive compatibility. On the other hand, if $\{r\}$ decays too quickly, then even truthful type bids would violate (\ref{constraint1}) and (\ref{constraint2}) infinitely often, thereby incurring a large penalty and violating individual rationality. Hence, the sequence $\{r\}$ should be chosen in a manner that balances the two objectives. This is achieved by choosing $\{r\}$ such that
\begin{align}
    \lim_{l\to\infty}r(l)=0,\label{rGoesToZero}
\end{align}
and for some $\gamma>0,$ %and $L_r\in\mathbb{N},$
\begin{align}
    r(l)\geq\sqrt{\frac{\ln{2l^{1+\gamma}}}{2l}}\label{rlgreater}
\end{align}
for all $l\in\mathbb{Z}_+.$\footnote{It suffices that (\ref{rlgreater}) holds not for all $l$ but only for all sufficiently large $l$.}

To obtain an intuition for condition (\ref{rlgreater}), note that the empirical frequency $\frac{1}{l}\sum_{l'=1}^l\mathds{1}_{\{\delta_i(l')=t\}}$ resulting from the true type sequence of player $i$ is a random variable with mean $\theta_i(t)$ and standard deviation that scales as ${{1}/\sqrt{l}}.$ Therefore, if the window size decays at the same rate, then the probability of the empirical frequency falling outside the window would remain at a constant value. This suggests that the window size must scale slower than at least ${{1}/\sqrt{l}}.$ By scaling the window size only slightly slower than ${{1}/\sqrt{l}}$, namely the rate specified by condition (\ref{rlgreater}), truthful bids are guaranteed to almost surely satisfy (\ref{constraint1}) and (\ref{constraint2}) for all but finitely many values of $l$. This is established in Lemma \ref{LemmaHonesty}.

How should the penalty sequence $\{J_p\}$ be chosen? As shown in Lemma \ref{LemmaHonesty}, truthful players incur a penalty only finitely often almost surely, and so the long-term average penalty that they incur is almost surely zero regardless of how the sequence $\{J_p\}$ is chosen. Therefore, the only objective in the design of $\{J_p\}$ is for {every} non-truthful strategy to incur a sufficiently high penalty. This is accomplished by choosing $\{J_p\}$ to be any nonnegative sequence such that 
\begin{align}
    \lim_{l\to\infty} \frac{J_p(l)}{l}=\infty.\label{JpIsOmegaL}
\end{align}

We now have the necessary quantities to define the second-stage payment function. Define the event
\begin{align}
    {E}_{i,\boldsymbol{S}}(l)\coloneqq{\{\max_{t\in\Delta}\;\vert\widehat{f}_{i,t}(l)\vert\geq r(l)\;\cup\; \max_{\mathbf{d}\in\Delta^n}\vert\widehat{h}_{i,\mathbf{d}}(l)\vert\geq r(l)\}}\label{Edefn}
\end{align}
which denotes the occurrence of at least one of (\ref{constraint1}) and (\ref{constraint2}). The second-stage payment of player $i$ on day $l$ is defined as
\begin{align}
    p_{i,l}^S(\boldsymbol{\widehat{\theta}},\boldsymbol{\widehat{\delta}}^l)\coloneqq \Bigg[v_i(\widehat{\delta}_i(l),g_1^*(\boldsymbol{\widehat{\theta}}),g_2^*(\boldsymbol{\widehat{\theta}},\widehat{\boldsymbol{\delta}}(l)))-\mathbb{E}_{\boldsymbol{\delta}\sim\mathbb{P}_{\boldsymbol{\widehat{\theta}}}}\big[v_i(\delta_i,g_1^*(\boldsymbol{\widehat{\theta}}),g_2^*(\boldsymbol{\widehat{\theta}},\boldsymbol{\delta}))\big]\Bigg]+J_p(l)\mathds{1}_{\{E_{i,\boldsymbol{S}}(l)\}}.\label{piS}
\end{align}
A negative value of the above quantity implies a transfer from the social planner to player $i$ on day $l.$ Note that if all players employ a truthful strategy, then the long-term average second-stage payment almost surely equals zero for every player.

The total payment $p_{i,l}(\boldsymbol{\widehat{\theta}},\boldsymbol{\widehat{\delta}}^l)$ that player $i$ transfers to the social planner on day $l$ is
\begin{align}
    p_{i,l}(\boldsymbol{\widehat{\theta}},\boldsymbol{\widehat{\delta}}^l)\coloneqq p_i^F(\boldsymbol{\widehat{\theta}})+p_{i,l}^S(\boldsymbol{\widehat{\theta}},\boldsymbol{\widehat{\delta}}^l).\label{Payment}
\end{align}

The following theorem establishes the incentive and optimality guarantees of the mechanism. 

\begin{theorem}
Consider the two-stage repeated stochastic game induced by the payment rule (\ref{Payment}). 
\begin{enumerate}
    \item A Truthful strategy profile is a dominant strategy non-bankrupting equilibrium. %I.e., the mechanism guarantees incentive compatibility. 
    \item If for every $i\in\{1,\hdots,n\}$ and every $\boldsymbol{\theta},$ 
    \begin{align}
        W^*(\boldsymbol{\theta})-W^*(\boldsymbol{\theta}_{-i})\geq 0,\label{noPessimist}
    \end{align} 
    then every player obtains a nonnegative utility by employing a truthful strategy regardless of the strategies that the other players employ. %I.e., the mechanism guarantees individual rationality. 
    \item If every player employs a truthful strategy, then the long-term average social welfare (\ref{SWasymptotic}) that results is almost surely equal to its optimal value $W^*(\boldsymbol{\theta}).$ %I.e., the mechanism guarantees optimality.
\end{enumerate}
\end{theorem}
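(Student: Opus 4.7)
The plan is to prove parts 2 and 3 from a closed-form expression for the long-run utility of a truthful player, and then leverage the welfare-maximization property of $(g_1^*,g_2^*)$ to handle part 1. The engine throughout is that any non-bankrupting strategy must satisfy the empirical constraints (\ref{constraint1}) and (\ref{constraint2}) for all but finitely many $l$ almost surely, since otherwise $J_p(l)/l\to\infty$ would drive the long-run average payment, and hence utility, to $-\infty$.

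For parts 2 and 3, I first compute the utility of a truthful player $i$ against arbitrary opponents. For such a player, $\widehat{\delta}_i(l)=\delta_i(l)$ for all but a density-zero set of days, so the $v_i(\widehat{\delta}_i(l),\cdot)$ term in $p_{i,l}^S$ collapses onto $v_i(\delta_i(l),\cdot)$ and cancels the social-welfare term in $u_i$; the penalty contribution vanishes in the long run by Lemma~\ref{LemmaHonesty}; and what remains is $\mathbb{E}_{\boldsymbol{\delta}\sim\mathbb{P}_{\widehat{\boldsymbol{\theta}}}}[v_i(\delta_i,g_1^*,g_2^*)]-p_i^F(\widehat{\boldsymbol{\theta}})$ with $\widehat{\theta}_i=\theta_i$. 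Substituting the VCG form (\ref{piF}) rearranges this into $u_i^{\text{truth}}=W^*(\widehat{\boldsymbol{\theta}})-W^*(\widehat{\boldsymbol{\theta}}_{-i})$. Part 2 is immediate from (\ref{noPessimist}) applied at $\boldsymbol{\theta}'=(\theta_i,\widehat{\boldsymbol{\theta}}_{-i})$. For part 3, if all players are truthful then $\widehat{\boldsymbol{\theta}}=\boldsymbol{\theta}$ and the stage social welfare in (\ref{SWasymptotic}) is an IID sequence with mean $W^*(\boldsymbol{\theta})$, so its Cesàro average converges almost surely by the strong law of large numbers; the density-zero exceptional set permitted by truthfulness does not affect the limit since valuations and costs are bounded.

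For part 1, fix $\boldsymbol{S}_{-i}'\in\mathcal{NB}_{-i}$ and let $\widehat{\boldsymbol{\theta}}_{-i}$ be their supertype bids. For any deviation $S_i'$, if the penalty event $E_{i,\boldsymbol{S}}(l)$ occurs infinitely often with positive upper density then $u_i^{S_i'}=-\infty$ and truthful trivially dominates. Otherwise the constraints (\ref{constraint1}) and (\ref{constraint2}) eventually hold for both $i$ and every opponent, which forces the empirical joint of $\widehat{\boldsymbol{\delta}}$ to converge to $\mathbb{P}_{\widehat{\boldsymbol{\theta}}'}$, where $\widehat{\boldsymbol{\theta}}'=(\widehat{\theta}_i,\widehat{\boldsymbol{\theta}}_{-i})$. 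The $v_i(\widehat{\delta}_i(l),\cdot)$ piece of $p_{i,l}^S$ then averages to $\mathbb{E}_{\mathbb{P}_{\widehat{\boldsymbol{\theta}}'}}[v_i(\delta_i,g_1^*,g_2^*)]$ and cancels the expectation term inside (\ref{piS}); after this telescoping, $u_i^{S_i'}$ equals $\liminf_L\frac{1}{L}\sum_l v_i(\delta_i(l),g_1^*(\widehat{\boldsymbol{\theta}}'),g_2^*(\widehat{\boldsymbol{\theta}}',\widehat{\boldsymbol{\delta}}(l)))-p_i^F(\widehat{\boldsymbol{\theta}}')$. A martingale strong law, using that $\delta_i(l)$ is IID and independent of $\mathcal{F}_{l-1}\vee\sigma(\delta_{-i}(l))$ while $\widehat{\boldsymbol{\delta}}_{-i}(l)$ is measurable with respect to this $\sigma$-algebra, shows that the empirical joint of $(\delta_i,\widehat{\delta}_i,\widehat{\boldsymbol{\delta}}_{-i})$ concentrates on $\alpha\otimes\widehat{\boldsymbol{\theta}}_{-i}$ for some coupling $\alpha$ on $\Delta\times\Delta$ with marginals $\theta_i$ and $\widehat{\theta}_i$ (passing to a subsequence if necessary for $\alpha$ to exist). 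Hence $u_i^{S_i'}+W^*(\widehat{\boldsymbol{\theta}}_{-i})=\mathbb{E}_{\alpha\otimes\widehat{\boldsymbol{\theta}}_{-i}}\bigl[v_i(\delta_i,g_2^*(\widehat{\boldsymbol{\theta}}',\widehat{\delta}_i,\widehat{\boldsymbol{\delta}}_{-i}))+\sum_{j\neq i}v_j(\widehat{\delta}_j,g_2^*)-c(g_1^*,g_2^*)\bigr]$, which I recognize as $W(\boldsymbol{\theta}_T,g_1^*(\widehat{\boldsymbol{\theta}}'),g_2')$ with $\boldsymbol{\theta}_T=(\theta_i,\widehat{\boldsymbol{\theta}}_{-i})$ and $g_2'$ the (possibly randomized) rule that first samples $\widehat{\delta}_i$ from $\alpha(\cdot\mid\delta_i)$ and then applies $g_2^*(\widehat{\boldsymbol{\theta}}',\cdot)$. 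By (\ref{gStarDefn}) at $\boldsymbol{\theta}_T$ and a convexity-based derandomization of $g_2'$, this quantity is at most $W^*(\boldsymbol{\theta}_T)=u_i^{\text{truth}}+W^*(\widehat{\boldsymbol{\theta}}_{-i})$, yielding $u_i^{S_i'}\leq u_i^{\text{truth}}$ as required.

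The main obstacle is the ergodic step establishing that the empirical joint of $(\delta_i(l),\widehat{\delta}_i(l),\widehat{\boldsymbol{\delta}}_{-i}(l))$ factors asymptotically as $\alpha\otimes\widehat{\boldsymbol{\theta}}_{-i}$: this requires combining the penalty-driven constraint (\ref{constraint2}) with a careful martingale-difference SLLN argument that isolates the IID innovation $\delta_i(l)$ from the history-dependent opponent bids, and it is the piece that most visibly uses the particular design of the second-stage payment. Everything else — the VCG-style cancellation of the fluctuation term, the reduction to a decision-rule inequality, and the derandomization — is structural once the limit distribution is identified.
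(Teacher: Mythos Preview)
Your architecture tracks the paper's proof closely: compute the truthful utility in closed form (giving $W^*(\theta_i,\widehat{\boldsymbol{\theta}}_{-i})-W^*(\widehat{\boldsymbol{\theta}}_{-i})$), dispose of parts~2 and~3 directly, and for part~1 case-split on whether the deviator's penalty term diverges; in the bounded-penalty case use the empirical constraints together with the non-bankrupting assumption on opponents to identify the limiting distribution of bids, and reduce the comparison $u_i^{\text{truth}}\ge u_i^{S_i'}$ to a welfare inequality for an auxiliary decision rule. That is exactly the paper's route.

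The gap is in your factorization step. Your martingale premise---$\delta_i(l)$ is fresh IID noise independent of $\mathcal{F}_{l-1}\vee\sigma(\delta_{-i}(l))$, while $\widehat{\boldsymbol{\delta}}_{-i}(l)$ is measurable there---establishes only that the empirical $(\delta_i,\widehat{\boldsymbol{\delta}}_{-i})$-marginal factors as $\theta_i\otimes\widehat{\boldsymbol{\theta}}_{-i}$. It does \emph{not} show that the empirical joint of $(\delta_i,\widehat{\delta}_i,\widehat{\boldsymbol{\delta}}_{-i})$ factors as $\alpha\otimes\widehat{\boldsymbol{\theta}}_{-i}$, because $\widehat{\delta}_i(l)$ may depend on the same common history that drives $\widehat{\boldsymbol{\delta}}_{-i}(l)$; thus the \emph{pair} $(\delta_i,\widehat{\delta}_i)$ need not be asymptotically independent of $\widehat{\boldsymbol{\delta}}_{-i}$ even though each coordinate separately is (constraint~(\ref{constraint2}) decorrelates $\widehat{\delta}_i$ from $\widehat{\boldsymbol{\delta}}_{-i}$, and your martingale decorrelates $\delta_i$ from $\widehat{\boldsymbol{\delta}}_{-i}$, but neither controls the three-way joint). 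Consequently your construction of $g_2'$ as ``sample $\widehat{\delta}_i$ from $\alpha(\cdot\mid\delta_i)$ and then apply $g_2^*$'' need not reproduce the limiting deviation welfare, and the inequality $u_i^{S_i'}\le u_i^{\text{truth}}$ is left unproved.

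The paper handles exactly this point by not insisting on a single factored limit. It works with the set $\Psi(\theta_i,\widehat{\boldsymbol{\theta}})$ of \emph{all} joint laws on $\Delta\times\Delta^n$ with $\delta_i$-marginal $\theta_i$ and $\widehat{\boldsymbol{\delta}}$-marginal $\prod_j\widehat{\theta}_j$, shows that the empirical joint $\psi_L$ approaches this set, bounds $\liminf_L\rho_i(\psi_L)\le\sup_{\psi\in\Psi}\rho_i(\psi)$ by uniform continuity of $\rho_i$ on a compact domain, and for the key welfare comparison (its inequality~(\ref{tempUse2})) constructs the auxiliary second-stage rule via a kernel $P_{\delta_i'\mid\boldsymbol{\delta}}$ that is allowed to depend on \emph{all} of $\boldsymbol{\delta}$, not only on $\delta_i$. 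Your derandomization and optimality step survive unchanged once you grant $g_2'$ this extra dependence; with that correction the remainder of your outline is sound.
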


\begin{proof}
Arbitrarily fix $\boldsymbol{\theta},$ $i\in\{1,\hdots,n\},$ the strategy $S_i\in\Lambda_i$ that player $i$ employs, and the strategy profile $\boldsymbol{S}_{-i}\in\mathcal{NB}_{-i}$ that all other players employ. We begin with a lemma. 

\begin{lemma}
For $T_i\in\mathcal{T}_i,$ it holds almost surely that
\begin{align}
    \limsup_{L\to\infty}\frac{1}{L}\sum_{l=1}^L J_p(l)\mathds{1}_{\{E_{i,(T_i,\boldsymbol{S}_{-i})}(l)\}}=0.
\end{align}
I.e., if player $i$ employs a truthful strategy, then the penalty that it pays is almost surely zero.\label{LemmaHonesty}
\end{lemma}
\begin{proof}
It suffices to show that $\{E_{i,(T_i,\boldsymbol{S}_{-i})}(l)\}$ almost surely occurs only finitely often. Arbitrarily fix $\mathbf{d}\in\Delta^n$. Define $\mathcal{F}_l\coloneqq\sigma(\widehat{\boldsymbol{\delta}}_{-i}^l,\delta_i^{l-1})$ so that $\big(\mathds{1}_{\{\widehat{\boldsymbol{\delta}}_{-i}(l')=\mathbf{d}_{-i}\}}\big[\mathds{1}_{\{\delta_i(l')=d_i\}}-\theta_i(d_i)\big],\mathcal{F}_{l'+1}\big)$ is a martingale difference sequence bounded by unity. It follows from the Azuma-Hoeffding inequality that 
\begin{align}
    \mathbb{P}\big(\big\vert\ \frac{1}{l}\sum_{l'=1}^l\mathds{1}_{\{\widehat{\boldsymbol{\delta}}_{-i}(l')=\mathbf{d}_{-i}\}}\big[\mathds{1}_{\{\delta_i(l')=d_i\}}-\theta_i(d_i)\big]\big\vert\geq r(l)\big)\leq 2e^{-2lr^2(l)}.
\end{align}
Combining the above inequality with (\ref{rlgreater}) implies
\begin{align}
    \mathbb{P}\big(\big\vert\ \frac{1}{l}\sum_{l'=1}^l\mathds{1}_{\{\widehat{\boldsymbol{\delta}}_{-i}(l')=\mathbf{d}_{-i}\}}\big[\mathds{1}_{\{\delta_i(l')=d_i\}}-\theta_i(d_i)\big]\big\vert\geq r(l)\big)\leq \frac{1}{l^{1+\gamma}}.
\end{align}
Using (\ref{hhatDefn}) and the fact that player $i$ employs a truthful strategy, the above inequality implies
\begin{align}
    \mathbb{P}\big(\big\vert\widehat{h}_{i,\mathbf{d}}(l)\big\vert\geq r(l)\big)\leq\frac{1}{l^{1+\gamma}}
\end{align}
which in turn implies that $\sum_{l=1}^\infty\mathbb{P}\big(\big\vert\widehat{h}_{i,\mathbf{d}}(l)\big\vert\geq r(l)\big)<\infty.$ Invoking the Borel-Cantelli lemma, we have that $\{\vert\widehat{h}_{i,\mathbf{d}}(l)\vert\geq r(l)\}$ almost surely occurs only finitely often. 

Similarly, $(\mathds{1}_{\{{\delta}_i(l')=d_i\}}-\theta_i(d_i),\mathcal{F}_{l'+1})$ is a martingale difference sequence bounded by unity and following the same sequence of arguments as above, it can be established that $\{\vert\widehat{f}_{i,d_i}(l)\vert\geq r(l)\}$ almost surely occurs only finitely often.

Since $\mathbf{d}$ is arbitrarily chosen, we have that for every $\mathbf{d}\in\Delta^n,$ $\{\vert\widehat{h}_{i,\mathbf{d}}(l)\vert\geq r(l)\}$ and $\{\vert\widehat{f}_{i,d_i}(l)\vert\geq r(l)\}$ almost surely occur only finitely often, and the desired result follows. 
\end{proof}

%We now return to proving incentive compatibility. 
We have
\begin{align*}
    u_i(S_i,\boldsymbol{S}_{-i},\boldsymbol{\theta},\boldsymbol{\delta}^\infty)&=\liminf_{L\to\infty}\frac{1}{L}\sum_{l=1}^Lv_i({\delta}_i(l),g_1^*(\boldsymbol{\widehat{\theta}}),g_2^*(\boldsymbol{\widehat{\theta}},\widehat{\boldsymbol{\delta}}(l)))-p_{i,l}(\boldsymbol{\widehat{\theta}},\boldsymbol{\widehat{\delta}}^l),
\end{align*}
where $\boldsymbol{\widehat{\theta}}$ and $\boldsymbol{\widehat{\delta}}^\infty$ are determined in accordance with $\boldsymbol{S}$.
Substituting (\ref{piF}) and (\ref{piS}) into (\ref{Payment}), substituting the resulting expression for $p_{i}(l)$ into the above equality, and simplifying the result yields
\begin{align}
    u_i(S_i,\boldsymbol{S}_{-i},\boldsymbol{\theta},\boldsymbol{\delta}^\infty)&=\big[W^*(\boldsymbol{\widehat{\theta}})-W^*(\boldsymbol{\widehat{\theta}}_{-i})\big]\nonumber\\
    &+\liminf_{L\to\infty}\frac{1}{L}\sum_{l=1}^L\bigg(v_i({\delta}_i(l),g_1^*(\boldsymbol{\widehat{\theta}}),g_2^*(\boldsymbol{\widehat{\theta}},\widehat{\boldsymbol{\delta}}(l)))-v_i(\widehat{\delta}_i(l),g_1^*(\boldsymbol{\widehat{\theta}}),g_2^*(\boldsymbol{\widehat{\theta}},\widehat{\boldsymbol{\delta}}(l)))\bigg)\nonumber\\
    &-\limsup_{L\to\infty}\frac{1}{L}\sum_{l=1}^LJ_p(l)\mathds{1}_{\{E_{i,\boldsymbol{S}}(l)\}}.\label{uiAvgExpression}
\end{align}
Arbitrarily fix $T_i\in\mathcal{T}_i$. Then, we obtain using Lemma \ref{LemmaHonesty} and some straightforward algebra that
\begin{align}
    u_i(T_i,\boldsymbol{S}_{-i},\boldsymbol{\theta},\boldsymbol{\delta}^\infty)-u_i(S_i,\boldsymbol{S}_{-i},\boldsymbol{\theta},\boldsymbol{\delta}^\infty)    &=\big[W^*(\theta_i,\boldsymbol{\widehat{\theta}}_{-i})-W^*(\widehat{\theta}_i,\boldsymbol{\widehat{\theta}}_{-i})\big]\nonumber\\
    &+\limsup_{L\to\infty}\frac{1}{L}\sum_{l=1}^L\bigg(v_i(\widehat{\delta}_i(l),g_1^*(\boldsymbol{\widehat{\theta}}),g_2^*(\boldsymbol{\widehat{\theta}},\widehat{\boldsymbol{\delta}}(l)))-v_i({\delta}_i(l),g_1^*(\boldsymbol{\widehat{\theta}}),g_2^*(\boldsymbol{\widehat{\theta}},\widehat{\boldsymbol{\delta}}(l)))\bigg)\nonumber\\
    &+\limsup_{L\to\infty}\frac{1}{L}\sum_{l=1}^LJ_p(l)\mathds{1}_{\{E_{i,\boldsymbol{S}}(l)\}}.\label{utilityDifference}
\end{align}
In what follows, we show that the above quantity is almost surely nonnegative, implying that truthful strategy profiles are Dominant Strategy Non-Bankrupting Equilibria. 
%The proof is structured as follows. 

Define 
\begin{align}
    \nu_i(\widehat{\theta}_i,\boldsymbol{\widehat{\theta}}_{-i})\coloneqq\mathbb{E}_{(\widehat{\delta}_i,\widehat{\boldsymbol{\delta}}_{-i})\sim\widehat{\theta}_i\times\boldsymbol{\widehat{\theta}}_{-i}}\bigg[v_i\big(\widehat{\delta}_i,g_1^*(\boldsymbol{\widehat{\theta}}),g_2^*(\boldsymbol{\widehat{\theta}},\widehat{\boldsymbol{\delta}})\big)\bigg]
\end{align}
and
\begin{align}
    \nu_{-i}(\widehat{\theta}_i,\boldsymbol{\widehat{\theta}}_{-i})\coloneqq\mathbb{E}_{(\widehat{\delta}_i,\widehat{\boldsymbol{\delta}}_{-i})\sim\widehat{\theta}_i\times\boldsymbol{\widehat{\theta}}_{-i}}\bigg[\sum_{j\neq i}v_j\big(\widehat{\delta}_j,g_1^*(\boldsymbol{\widehat{\theta}}),g_2^*(\boldsymbol{\widehat{\theta}},\widehat{\boldsymbol{\delta}})\big)-c\big(g_1^*(\boldsymbol{\widehat{\theta}}),g_2^*(\boldsymbol{\widehat{\theta}},\widehat{\boldsymbol{\delta}})\big)\bigg]
\end{align}
so that
\begin{align}
    W^*(\widehat{\theta}_i,\boldsymbol{\widehat{\theta}}_{-i})=\nu_i(\widehat{\theta}_i,\boldsymbol{\widehat{\theta}}_{-i})+\nu_{-i}(\widehat{\theta}_i,\boldsymbol{\widehat{\theta}}_{-i})\label{tempUse1}.
\end{align}
%and
%\begin{align}
 %   W^*(\theta_i,\boldsymbol{\theta}_{-i})=\nu_i(\boldsymbol{\theta})+\nu_{-i}(\boldsymbol{\theta}).
%\end{align}
%Let $\psi$ be any joint probability mass function of two $\{1,\hdots,T\}-$valued random variables with ``$y-$marginal" distributed according to $\widehat{\theta}_i.$ Define %$(\widehat{\delta}_i,\delta_i)$ so that $\pi(\widehat{t},t)\coloneqq\mathbb{P}\big\{\widehat{\delta}_i=\widehat{t},\delta_i=t\big\}$. Define
%\begin{align}
 %   \rho_i(\psi,\boldsymbol{\widehat{\theta}}_{-i})\coloneqq\mathbb{E}_{({\delta}_i,\widehat{\delta}_i,\widehat{\boldsymbol{\delta}}_{-i})\sim\psi\times\boldsymbol{\widehat{\theta}}_{-i}}\bigg[v_i({\delta}_i,g_1^*(\boldsymbol{\widehat{\theta}}),g_2^*(\boldsymbol{\widehat{\theta}},\widehat{\boldsymbol{\delta}}))\bigg].
%\end{align}
Let $\mu(\Delta,\Delta^n)$ be the set of joint probability mass functions over $\Delta\times\Delta^n.$ For $\psi\in\mu(\Delta,\Delta^n),$ define 
\begin{align}
    \rho_i(\psi)\coloneqq\mathbb{E}_{({\delta}_i,[\widehat{\delta}_i,\widehat{\boldsymbol{\delta}}_{-i}])\sim\psi}\bigg[v_i({\delta}_i,g_1^*(\boldsymbol{\widehat{\theta}}),g_2^*(\boldsymbol{\widehat{\theta}},\widehat{\boldsymbol{\delta}}))\bigg].\label{rhodefn}
\end{align}
Let $\Psi(\theta_i,\boldsymbol{\widehat{\theta}})\subset \mu(\Delta,\Delta^n)$ be the set of joint probability mass functions with ``$x-$marginal" distributed according to $\theta_i$ and ``$y-$marginal" distributed according to ${\widehat{\theta}_1\times\hdots\times\widehat{\theta}_n}.$
{Then, for every $\psi\in\Psi(\theta_i,\boldsymbol{\widehat{\theta}}),$ 
\begin{align}
    W^*(\theta_i,\boldsymbol{\widehat{\theta}}_{-i})\geq\rho_i(\psi)+\nu_{-i}(\widehat{\theta}_i,\boldsymbol{\widehat{\theta}}_{-i}).\label{tempUse2}
\end{align}}
To see this, note that if $(\delta_i,\boldsymbol{\delta}_{-i})\sim\theta_i\times\boldsymbol{\widehat{\theta}}_{-i}$, then the social planner can map $\delta_i$ to a random variable $\delta_i'$ using an appropriate probability transition kernel $P_{\delta_i'\vert\boldsymbol{\delta}}$ such that $(\delta_i,[\delta_i',\boldsymbol{\delta}_{-i}])\sim\psi\in\Psi(\theta_i,\boldsymbol{\widehat{\theta}}).$ Consequently, by choosing the first-stage outcome as $g_1^*(\boldsymbol{\widehat{\theta}})$ and the second-stage outcome as $g_2^*(\boldsymbol{\widehat{\theta}},[{\delta}_i',\boldsymbol{\delta}_{-i}])$, an expected social welfare of $\rho_i(\psi)+\nu_{-i}(\widehat{\theta}_i,\boldsymbol{\widehat{\theta}}_{-i})$ can be attained. It follows that the optimal expected social welfare $W^*(\theta_i,\boldsymbol{\widehat{\theta}}_{-i})$ is at least as large, which yields (\ref{tempUse2})\footnote{This argument requires the second-stage decision rule to be randomized whereas we have assumed $g_1^*$ and $g_2^*$ to be deterministic functions. This apparent gap can be addressed by noting that an optimal decision rule $(g_1^*,g_2^*)$ can be found within the class of deterministic functions.}.

Suppose for a moment that each player $j\in\{1,\hdots,n\}$ employs a stationary second-stage bidding policy $\mu_S^j$ so that $\widehat{\delta}_j(l)$ is chosen as a function of $\delta_j(l)$ according to some probability kernel $P^j_{\widehat{\delta}_j\vert\delta_j}$ for every $l.$ For player $j$'s strategy to be non-bankrupting, it is necessary that $\lim_{L\to\infty}\frac{1}{L}\sum_{l=1}^L \mathds{1}_{\{\widehat{\delta}_j(l)=t\}}=\widehat{\theta}_j(t)$ almost surely for every $t\in\Delta$ for (\ref{constraint1}) would be violated infinitely often otherwise resulting in infinite average penalty. So, for every $j\in\{1,\hdots,n\},$ if player $j$'s strategy is to be non-bankrupting, then $P^j_{\widehat{\delta}_j\vert\delta_j}$ must be such that $\widehat{\delta}_j(1)\sim\widehat{\theta}_j$ given $\delta_j(1)\sim\theta_j$. It follows that for every $j\in\{1,\hdots,n\},$ $(\delta_j(1),\boldsymbol{\widehat{\delta}}(1))\sim\psi_j$ for some $\psi_j\in\Psi(\theta_j,\boldsymbol{\widehat{\theta}})$. It also follows that $\{(\boldsymbol{\delta}(1),\boldsymbol{\widehat{\delta}}(1)),(\boldsymbol{\delta}(2),\boldsymbol{\widehat{\delta}}(2)),\hdots\}$ is a sequence of IID random variables, 
%with $(\boldsymbol{\delta}(1),\boldsymbol{\widehat{\delta}}(1))\sim{\psi}_S^1\times\hdots\times\psi_S^n,$ 
and so we obtain using the Strong Law of Large Numbers (SLLN) that the RHS of (\ref{utilityDifference}) almost surely equals $[W^*(\theta_i,\boldsymbol{\widehat{\theta}}_{-i})-W^*(\widehat{\theta}_i,\boldsymbol{\widehat{\theta}}_{-i})]+[\nu_i(\widehat{\theta}_i,\boldsymbol{\widehat{\theta}}_{-i})-\rho_i(\psi_i)].$ Upon substituting (\ref{tempUse1}), this becomes $W^*(\theta_i,\boldsymbol{\widehat{\theta}}_{-i})-\nu_{-i}(\widehat{\theta}_i,\boldsymbol{\widehat{\theta}}_{-i})-\rho_i(\psi_i),$ and combining it with (\ref{tempUse2}) implies the nonnegativity of (\ref{utilityDifference}). 

However, in order to fabricate the type bids, the players may not restrict just to stationary policies but can employ any history-dependent policy. The rest of the proof is devoted to showing that the same result, namely, the nonnegativity of (\ref{utilityDifference}), holds even in the general case where the players may employ any non-bankrupting strategy. The key to establishing this is the following lemma that characterizes the empirical joint distributions of the reported types when all players employ a non-bankrupting strategy. 

\begin{lemma}
Suppose that for every $j\in\{1,\hdots,n\},$
\begin{align}
    \limsup_{L\to\infty}\frac{1}{L}\sum_{l=1}^LJ_p(l)\mathds{1}_{\{E_{j,\boldsymbol{S}}(l)\}}<\infty.\label{allPlayersNB}
\end{align}
Then, for every $\mathbf{d}\in\Delta^n,$
\begin{align}
    \lim_{L\to\infty}\frac{1}{L}\sum_{l=1}^L\mathds{1}_{\{\boldsymbol{\widehat{\delta}}(l)=\mathbf{d}\}}=\Pi_{j=1}^n\widehat{\theta}_j(d_j).\label{empiricalDistEqualsPhi}
\end{align}\label{lemma1}
\end{lemma}
\begin{proof}
It suffices to show that for all  $\mathbf{d}\in\Delta^n$ and all $k\in\{1,\hdots,n-1\},$
\begin{align}
    \lim_{L\to\infty}\frac{1}{L}\sum_{l=1}^L\mathds{1}_{\{\widehat{\boldsymbol{\delta}}_{k:n}(l)=\mathbf{d}_{k:n}\}}=\widehat{\theta}_k(d_k)\bigg[\lim_{L\to\infty}\frac{1}{L}\sum_{l=1}^L\mathds{1}_{\{\widehat{\boldsymbol{\delta}}_{k+1:n}(l)=\mathbf{d}_{k+1:n}\}}\bigg]\label{l1s1}
\end{align}
and that
\begin{align}
    \lim_{L\to\infty}\frac{1}{L}\sum_{l=1}^L\mathds{1}_{\{\widehat{{\delta}}_{n}(l)={d}_{n}\}}=\widehat{\theta}_n(d_n),\label{l1s2}
\end{align}
where $\mathbf{d}_{k:n}\coloneqq[d_k\;d_{k+1}\;\hdots\;d_n]$ and $\widehat{\boldsymbol{\delta}}_{k:n}(l)$ is defined likewise. 

Combining (\ref{allPlayersNB}) with (\ref{JpIsOmegaL}) implies that $\limsup_{L\to\infty}\sum_{l=1}^L\mathds{1}_{\{E_{j,\boldsymbol{S}}(l)\}}<\infty$ for every $j\in\{1,\hdots,n\}$. I.e., the event sequence $\{E_{j,\boldsymbol{S}}(l)\}$ occurs only finitely often. Hence, we obtain using (\ref{Edefn}) and (\ref{rGoesToZero}) that for all  $\mathbf{d}\in\Delta^n$ and all $j\in\{1,\hdots,n\},$
\begin{align}
    \lim_{L\to\infty}\widehat{f}_{j,d_j}(L)=0\label{fhatlim0}
\end{align}
and 
\begin{align}
    \lim_{L\to\infty}\widehat{h}_{j,\mathbf{d}}(L)=0.\label{hhatlim0}
\end{align}
Substituting (\ref{fhatDefn}) in (\ref{fhatlim0}) implies 
\begin{align}
    \lim_{L\to\infty}\frac{1}{L}\sum_{l=1}^L\mathds{1}_{\{\widehat{\delta}_j(l)=d_j\}}=\widehat{\theta}_j(d_j)\label{marginalEqual}
\end{align}
for all ${d}_j\in\Delta$ and all $j\in\{1,\hdots,n\},$ which in particular establishes (\ref{l1s2}). 

Substituting (\ref{hhatDefn}) in (\ref{hhatlim0}) implies 
\begin{align}
    \lim_{L\to\infty}\frac{1}{L}\sum_{l=1}^L\mathds{1}_{\{\widehat{\delta}_j(l)=d_j,\widehat{\boldsymbol{\delta}}_{-j}(l)=\mathbf{d}_{-j}\}}=\widehat{\theta}_j(d_j)\bigg[\lim_{L\to\infty}\frac{1}{L}\sum_{l=1}^L\mathds{1}_{\{\widehat{\boldsymbol{\delta}}_{-j}(l)=\mathbf{d}_{-j}\}}\bigg]\label{noCollusion}
\end{align}
for all $\mathbf{d}\in\Delta^n$ and all $j\in\{1,\hdots,n\}$. In concluding (\ref{noCollusion}), we have assumed that the limit in the RHS exists, to justify which certain additional arguments are required. We omit these details since they might lessen the focus on the main aspects of the proof. 

The equality (\ref{l1s1}) can now established by noting that
\begin{align}
    \lim_{L\to\infty}\frac{1}{L}\sum_{l=1}^L\mathds{1}_{\{\widehat{\boldsymbol{\delta}}_{k:n}(l)=\mathbf{d}_{k:n}\}}&=\lim_{L\to\infty}\frac{1}{L}\sum_{l=1}^L\sum_{t_1,\hdots,t_{k-1}}\mathds{1}_{\{{\widehat{\delta}}_1(l)=t_1,\hdots,\widehat{\delta}_{k-1}(l)=t_{k-1},\widehat{\boldsymbol{\delta}}_{k:n}(l)=\mathbf{d}_{k:n}\}}\nonumber\\
    &=\sum_{t_1,\hdots,t_{k-1}}\lim_{L\to\infty}\frac{1}{L}\sum_{l=1}^L\mathds{1}_{\{\widehat{\delta}_1(l)=t_1,\hdots,\widehat{\delta}_{k-1}(l)=t_{k-1},\widehat{\boldsymbol{\delta}}_{k:n}(l)=\mathbf{d}_{k:n}\}}\nonumber\\
    &=\sum_{t_1,\hdots,t_{k-1}}\bigg[\widehat{\theta}_k(d_k)\bigg]\bigg[\lim_{L\to\infty}\frac{1}{L}\sum_{l=1}^L\mathds{1}_{\{\widehat{\delta}_1(l)=t_1,\hdots,\widehat{\delta}_{k-1}(l)=t_{k-1},\widehat{\boldsymbol{\delta}}_{k+1:n}(l)=\mathbf{d}_{k+1:n}\}}\bigg]\nonumber\\
    &=\widehat{\theta}_k(d_k)\bigg[\lim_{L\to\infty}\frac{1}{L}\sum_{l=1}^L\sum_{t_1,\hdots,t_{k-1}}\mathds{1}_{\{\widehat{\delta}_1(l)=t_1,\hdots,\widehat{\delta}_{k-1}(l)=t_{k-1},\widehat{\boldsymbol{\delta}}_{k+1:n}(l)=\mathbf{d}_{k+1:n}\}}\bigg]\nonumber\\
    &=\widehat{\theta}_k(d_k)\bigg[\lim_{L\to\infty}\frac{1}{L}\sum_{l=1}^L\mathds{1}_{\{\widehat{\boldsymbol{\delta}}_{k+1:n}(l)=\mathbf{d}_{k+1:n}\}}\bigg],
\end{align}
where the third equality follows from (\ref{noCollusion}). 
\end{proof}

It follows from (\ref{JpIsOmegaL}) that $\limsup_{L\to\infty}\frac{1}{L}\sum_{l=1}^LJ_p(l)\mathds{1}_{\{E_{i,\boldsymbol{S}}(l)\}}$ can only take values $0$ and $\infty.$ In the latter case, the nonnegativity of (\ref{utilityDifference}) is immediate. In the former case, since $\boldsymbol{S}_{-i}$ is a non-bankrupting strategy profile, we have that for all $j\in\{1,\hdots,n\},$ 
\begin{align}
    \limsup_{L\to\infty}\frac{1}{L}\sum_{l=1}^LJ_p(l)\mathds{1}_{\{E_{j,\boldsymbol{S}}(l)\}}<\infty\label{t1u0}
\end{align}
almost surely. Consequently, Lemma \ref{lemma1} applies, and we get
\begin{align}
    \lim_{L\to\infty}\frac{1}{L}\sum_{l=1}^L v_i(\widehat{\delta}_i(l),g_1^*(\boldsymbol{\widehat{\theta}}),g_2^*(\boldsymbol{\widehat{\theta}},\widehat{\boldsymbol{\delta}}(l)))=\nu_i(\widehat{\theta}_i,\boldsymbol{\widehat{\theta}}_{-i}).\label{t1u1}
\end{align}

Now, consider the empirical joint distribution $\psi_L(d,\mathbf{\widehat{d}})\coloneqq\frac{1}{L}\sum_{l=1}^L\mathds{1}_{\{{\delta}_i(l)={d},\widehat{\boldsymbol{\delta}}(l)=\widehat{\mathbf{d}}\}},$ where ${d}\in\Delta$ and $\widehat{\mathbf{d}}\in\Delta^n.$ Note that $\psi_L\in\mu(\Delta,\Delta^n)$ for all $L\in\mathbb{Z}_+.$ It follows from SLLN that for any $d\in\Delta,$ $\lim_{L\to\infty}\sum_{\mathbf{\widehat{d}}}\psi_L(d,\widehat{\mathbf{d}})=\theta_i(d).$ Since (\ref{t1u0}) holds, we obtain using Lemma \ref{lemma1} that for any $\mathbf{\widehat{d}}\in\Delta^n,$ $\lim_{L\to\infty}\sum_{{{d}}}\psi_L(d,\widehat{\mathbf{d}})=\Pi_{j=1}^n\widehat{\theta}_j(\widehat{{d}_j}).$ I.e., the sequence $\{\psi_L\}$ of empirical joint distributions is such that its x-marginal approaches the distribution $\theta_i$ and its y-marginal approaches the distribution ${\widehat{\theta}_1\times\hdots\times\widehat{\theta}_n}.$ It can be shown as a consequence that $\{\psi_L\}$ approaches the set $\Psi(\theta_i,\boldsymbol{\widehat{\theta}})$ in that $\min_{\psi\in\Psi(\theta_i,\boldsymbol{\widehat{\theta}})}\vert\vert\psi-\psi_L\vert\vert\to0$ as $L\to\infty,$ where $\vert\vert\cdot\vert\vert$ can be any norm defined on the set $\mu(\Delta,\Delta^n).$ Also, the function $\rho_i:\mu(\Delta,\Delta^n)\to\mathbb{R}$ defined in (\ref{rhodefn}) is a continuous function over a compact set, and hence uniformly continuous. It follows that 
\begin{align}
    \liminf_{L\to\infty}\rho_i(\psi_L)\leq\sup_{\psi\in\Psi(\theta_i,\boldsymbol{\widehat{\theta}})}\rho_i(\psi).\label{rhoIbounded}
\end{align}
Note also that $\frac{1}{L}\sum_{l=1}^Lv_i(\delta_i(l),g_1^*(\boldsymbol{\widehat{\theta}}),g_2^*(\boldsymbol{\widehat{\theta}},\widehat{\boldsymbol{\delta}}(l)))=\mathbb{E}_{(\delta_i,[\widehat{\delta}_i,\widehat{\boldsymbol{\delta}}_{-i}])\sim\psi_L}[v_i(\delta_i,g_1^*(\boldsymbol{\widehat{\theta}}),g_2^*(\boldsymbol{\widehat{\theta}},\widehat{\boldsymbol{\delta}}))]=\rho_i(\psi_L).$ Taking the limit as $L\to\infty$ and using (\ref{rhoIbounded}) implies
\begin{align}
    \liminf_{L\to\infty}\frac{1}{L}\sum_{l=1}^Lv_i(\delta_i(l),g_1^*(\boldsymbol{\widehat{\theta}}),g_2^*(\boldsymbol{\widehat{\theta}},\boldsymbol{\widehat{\delta}}(l)))\leq\sup_{\psi\in\Psi(\theta_i,\boldsymbol{\widehat{\theta}})}\rho_i(\psi).\label{t1u2}
\end{align}
Substituting (\ref{t1u1}) and (\ref{t1u2}) in (\ref{utilityDifference}) yields
\begin{align*}
    u_i(T_i,\boldsymbol{S}_{-i},\boldsymbol{\theta},\boldsymbol{\delta}^\infty)-u_i(S_i,\boldsymbol{S}_{-i},\boldsymbol{\theta},\boldsymbol{\delta}^\infty)\geq[W^*(\theta_i,\boldsymbol{\widehat{\theta}}_{-i})-W^*(\widehat{\theta}_i,\boldsymbol{\widehat{\theta}}_{-i})]+\nu_i(\widehat{\theta}_i,\boldsymbol{\widehat{\theta}_{-i}})-\sup_{\psi\in\Psi(\theta_i,\boldsymbol{\widehat{\theta}})}\rho_i(\psi).
\end{align*}
Upon substituting (\ref{tempUse1}), the RHS of the above inequality becomes $W^*(\theta_i,\boldsymbol{\widehat{\theta}}_{-i})-\nu_{-i}(\widehat{\theta}_i,\boldsymbol{\widehat{\theta}_{-i}})-\sup_{\psi\in\Psi(\theta_i,\boldsymbol{\widehat{\theta}})}\rho_i(\psi).$ Combining this with (\ref{tempUse2}) implies its nonnegativity, thereby establishing the nonnegativity of (\ref{utilityDifference}).

We now prove the second statement of the theorem. Arbitrarily fix $\boldsymbol{S}_{-i}\in\Lambda_{-i}$ and $T_i\in\mathcal{T}_i.$ Using (\ref{Payment}), (\ref{uiAsymptotic}) and Lemma \ref{LemmaHonesty}, we obtain almost surely that $u_i(T_i,\boldsymbol{S}_{-i},\boldsymbol{\theta},\boldsymbol{\delta}^\infty)=\big[W^*(\theta_i,\boldsymbol{\widehat{\theta}}_{-i})-W^*(\boldsymbol{\widehat{\theta}}_{-i})\big]\geq0$, where the inequality follows from (\ref{noPessimist}). Hence, truth-telling is individually rational for every player.

That the mechanism maximizes social welfare under truthful bidding is a straightforward consequence of the optimality of the first- and the second-stage decision rules. 
\end{proof}

The following section describes an application of the mechanism to the design of demand response markets. 

\section{Application to Demand Response Markets}\label{applications}

As mentioned in Section I, one of the motivating reasons for introducing the environment of a two-stage repeated stochastic game is its ability to readily model many problems that arise in the context of next-generation electricity markets. We illustrate one such problem in this section, namely, mechanism design for demand response markets. In addition to illustrating an application of the proposed framework, the results of this section also serve to illustrate the benefits of using the proposed mechanism as opposed to other ``natural" mechanisms that a policy-maker might employ in such scenarios. 

One of the main requirements of power systems operations is that the power supply has to equal the random demand at each time instant. In conventional systems, the power supply can be controlled, and so the generation is continuously adjusted to follow the random demand to maintain balance. However, at deep levels of renewable energy penetration, the generation becomes random. A popular paradigm for maintaining demand-supply balance in such a system is to make the demand follow the random supply. This typically involves curtailing consumption during times of power supply shortage. This is referred to as demand response, and is achieved by using incentives to modulate the demand.  

One of the key challenges in implementing demand response is that in order to optimally allocate a desired consumption reduction among the demand response providers, their costs for curtailing consumption must be known, which are in general random and private to the loads, and which they could misreport to achieve more favorable allocations for themselves. The goal of the mechanism designer is to elicit both the probability distribution and the realization of the private costs truthfully. See \cite{LCSS2022} for more details. In what follows, we describe how the mechanism developed in the previous section can be applied to this problem. 

%. The first is that in order to calculate the rebate of a demand response provider, their power consumption reduction must be measured with respect to a counterfactual {baseline}, which is the power that the load would have consumed had it not been called upon for demand response. The baselines of consumers are in general random variables whose probability distributions and realizations are only privately known to the consumers. This renders the estimation of their power consumption reduction by the system operator non-trivial and prone to strategic exploitation by the consumers. In fact, there have been instances in the recent past wherein demand response providers have inflated their baselines in order to obtain larger payments during demand response periods \cite{fercGaming}. 

%Here, we restrict attention to a special case wherein the baselines are assumed to be either known to the system operator or truthfully reported, and it is only the costs for reducing consumption that are private to the loads and which they could strategically misreport. 

In this section, we overload certain notation. Specifically, whenever a demand response market-specific quantity maps to a two-stage repeated stochastic game-specific quantity, the former will be denoted using the same symbol that has been used for the latter. 

Consider a system consisting of $n$ Demand Response (DR) providers and a reserve generator. Each DR provider has a cost function that specifies the cost it incurs as a function of its power consumption reduction. We assume that the cost function is parameterizable and denote by $\delta_i(l)$ the parameter that specifies the cost function of DR provider $i$ on day $l$. Hence, $c(x,\delta_i(l))$ denotes the cost that DR provider $i$ incurs on day $l$ for curtailing its consumption by $x$ units from its baseline. The sequence $\boldsymbol{\delta}^\infty$ is IID with $\boldsymbol{\delta}(1)\sim\boldsymbol{\theta}\coloneqq\theta_1\times\hdots\times\theta_n$ where $\theta_i$ denotes the probability distribution of ${\delta}_i(1).$ The reserve generator has associated with it a production function $c_s:\mathbb{R}\to\mathbb{R}$ which specifies the cost it incurs as a function of the power that it produces. 

Denote by $d(l)$ the power shortage on day $l.$ The system operator wishes to minimize the social cost of compensating the shortage, and therefore wishes to determine the consumption reduction of the DR providers and the reserve generation on day $l$ as

\begin{align}
(\mathbf{x}^*(\boldsymbol{\delta}(l)),g_s^*(\boldsymbol{\delta}(l)))=\argmin_{x_1,\hdots,x_n,g_s} \quad & \sum_{i=1}^nc(x_i,\delta_i(l))+c_s(g_s)\\
\mathrm{subject\; to} \quad & \sum_{i=1}^nx_i+g_s=d(l).\nonumber
\end{align}

The problem of course is that the system operator does not know $\{\delta_1(l),\hdots,\delta_n(l)\},$ and so it requests the DR providers to bid their cost functions. Denote by $\widehat{\delta}_i(l)$ the parameter that DR provider $i$ bids on day $l$. The system operator computes $\mathbf{x}^*(\widehat{\boldsymbol{\delta}}(l))$ and pays each DR load $i$ a payment $p_i(l)$ on day $l$ for reducing its consumption by $x_i^*(\boldsymbol{\widehat{\delta}}(l)).$ The average utility that DR provider $i$ accrues is defined as 
\begin{align}
    u_i^\infty\coloneqq\lim_{L\to\infty}\frac{1}{L}\sum_{l=1}^Lp_i(l)-c(x_i^*(\widehat{\boldsymbol{\delta}}(l)),\delta_i(l)).
\end{align}

\begin{figure}
    \centering
    \includegraphics[scale=0.5]{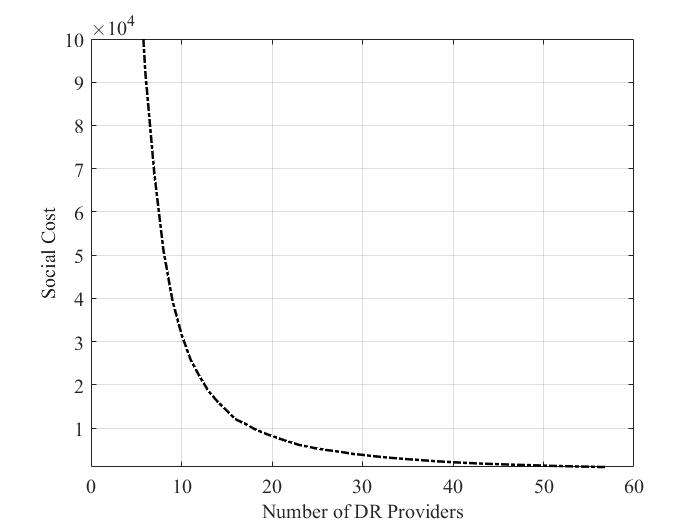}
    \caption{Social cost vs. the number of DR providers. The larger the number of participants in the demand response program, the lower the social cost of the program.}
    \label{fig:socialCost}
\end{figure}

It is straightforward to see that the average utility of each DR provider is a function of not only its own bidding strategy, but also the bidding strategy of the other DR loads. Consequently, a DR provider may not bid its cost truthfully if there is a possibility for it to obtain a larger utility by misreporting its cost. This in turn could result in the demand response program operating in a manner that is not social cost-minimizing. This motivates the mechanism design problem. The mechanism presented in the previous section can be used to design a payment rule which results in truth-telling being a dominant strategy non-bankrupting equilibrium.

For our numerical study, we have taken $c(x,\delta_i)=\frac{\delta_i}{2}x^2$, $c_s(x,\delta_s)=\frac{\delta_s}{2}x^2,$ $\boldsymbol{\theta}$ to be a product of beta distributions of unit mean and variance $2$, and $\delta_s(l)$ to also be beta distributed with the same parameters. 

Fig. \ref{fig:socialCost} quantifies how the social cost reduces as the participation of DR providers increases. Fig. \ref{fig:sensitivity1} illustrates how the payment resulting from the proposed mechanism behaves from the point of view of a randomly chosen DR provider. Specifically, we fix the cost function of a randomly chosen DR provider and plot how its average payment varies with the mean of the costs of the other DR providers. Qualitatively, the higher the mean cost of a DR provider, the higher the inelasticity of its demand. Hence, Fig. \ref{fig:sensitivity1} quantifies the rate at which the payment received by a given DR load increases as a function of the inelasticity of the other DR providers. %Fig. \ref{fig:sensitivity2} plots the average payment received by an arbitrarily chosen DR provider as the variance of the costs of the other DR providers varies. 

An arguably natural alternative for the proposed mechanism is the posted price mechanism wherein the system operator announces the payment $p_{pp}$ that the DR providers would receive per unit reduction in their power consumption. Each DR provider $i$ then chooses its curtailment $x_{i,pp}^*(l)$ on day $l$ as ${x}_{i,pp}^*(l)=\argmin_{x} \; c(x,\delta_i(l))-p_{pp}x.$ The residual mismatch $d(l)-\sum_{i=1}^nx_{i,pp}^*(l)=:g_s(l)$ is purchased in the spot market at price $c_s(x,\delta_s(l))=\frac{\delta_s(l)}{2}g^2_s(l).$ Such a mechanism has been employed, for example, in a prior demand response trial in the United Kingdom. 

How do such ``simple," ``natural" alternatives compare with the proposed mechanism? Fig. \ref{fig:socialCostComparison} compares the social cost attained by the proposed mechanism with the social cost attained by the posted price mechanism. Certain important observations are in order. First, note that there exists a price point at which the posted price mechanism attains its minimum social cost. However, this price point is a function of the type distributions of the DR loads which are their private knowledge. This necessitates the system operator to perform price discovery in order to compute the optimal price point --- a process that is vulnerable to strategic manipulation by the DR providers. Secondly, even assuming that the DR providers do not manipulate the price discovery, the minimum social cost that can be attained by the posted price mechanism is in general strictly larger than what can be attained by employing the proposed mechanism.

\begin{figure}
    \centering
    \includegraphics[scale=0.5]{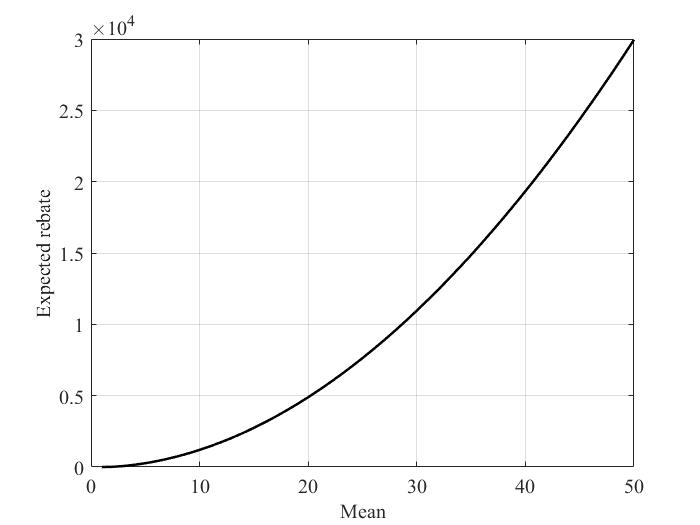}
    \caption{Average payment received by a fixed DR provider as a function of the mean of the supertypes of the other DR providers. The fixed DR provider has cost parameter $\delta_i(l)=4$ for all $l$, and the supertypes of the other loads are beta distributed with varying mean and a fixed variance of $2$. Hence, the average payment received by a given load increases as the demand of the other loads become more and more inelastic.}
    \label{fig:sensitivity1}
\end{figure}

%\begin{figure}
 %   \centering
  %  \includegraphics[scale=0.5]{}
   % \caption{Average payment received by a fixed DR provider as a function of the variance of the supertypes of the other DR providers. The fixed DR provider has cost parameter $\delta_i(l)=4$ for all $l$. The supertypes of the other DR loads are beta distributed with varying variance and a fixed mean of $4$. Hence, the average payment received by a fixed load decreases as the demand of the other loads become more and more inelastic.}
    %\label{fig:sensitivity2}
%\end{figure}

\begin{figure}
    \centering
    \includegraphics[scale=0.5]{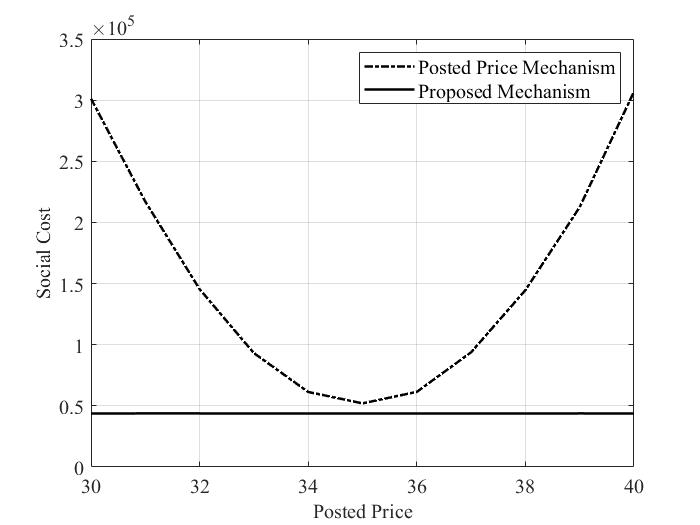}
    \caption{The social cost attained by the posted price mechanism vs. the price.}
    \label{fig:socialCostComparison}
\end{figure}

\section{Related Work}\label{relatedWork}
The setting of two-stage stochastic games was introduced in \cite{Mukund2007} which considers a one-shot setting and develops a mechanism that renders truthful bidding a sequential ex post Nash equilibrium. Reference \cite{Jain1} considers a two-stage game setting to model electricity markets consisting of wind power producers and develops a mechanism that incentivizes truthful bidding. However, it assumes that it is only in the first stage of the game that the wind power producers can bid strategically, and not in the second stage. In contrast, the setting that we have considered assumes that the valuation function distribution \emph{and} the valuation function realization are private to the players, and that they can misreport either or both of them to accrue a higher utility. Reference \cite{mezzetti2004mechanism} presents a two-stage mechanism called the generalized Groves mechanism. In terms of the terminology and the framework presented in this paper, the setting in \cite{mezzetti2004mechanism} can be interpreted as each player having a privately known distribution of its valuation function which it is required to bid to the social planner. The joint distribution of the players' valuation functions is assumed to be common knowledge. The social planner chooses an outcome that maximizes the expected social welfare based on the bids. After the social planner chooses the outcome, the valuation functions realize, which the players are required to bid in the second stage. Following this, a final payment is made. The payment rule guarantees that truth-telling by all players is an ex post Nash equilibrium. It is important to recognize that it is only the payment rule that has two stages in the aforementioned setting, and not the game itself. This in fact is one of the key departures of the one-shot two-stage stochastic game setting from the setting considered in \cite{mezzetti2004mechanism}; the latter doesn't include the possibility for the social planner to take recourse actions after the valuation functions realize. In the context of electricity markets, not only is it feasible to take recourse actions, it is also \emph{imperative} to take recourse actions if grid stability is to be maintained. Reference \cite{BilateralTradeTwoStage} builds upon the mechanism proposed in \cite{mezzetti2004mechanism} to devise a two-stage mechanism for bilateral trade. A power system offering a demand response program is considered in \cite{DRTwoStage,DRTwoStageFull} and a two-stage mechanism is presented using which a certain quantity of power can be apportioned among the loads when a demand response event occurs. The first stage establishes a contingency plan that specifies the amount of power that would be supplied to each load in each contingency and the corresponding price, and the second stage, during which the contingency occurs, allows the loads to trade among themselves at the price established in the first stage. It is shown that the second stage trade results in an allocation that Pareto dominates the first-stage allocation. All of the aforementioned papers consider a one-shot game whereas the setting that we have considered is one of repeated plays. As mentioned in Section \ref{introduction}, the aspect of repeated plays introduces certain additional complexities for mechanism design that can be attributed to the availability history-dependent bidding strategies to players. A similar challenge manifests in dynamic games. References \cite{dynamicMechanism1,dynamicMechanism2,dynamicMechanism3,dynamicAuctions,Ma_Kumar} are some of the papers that address the problem of mechanism design for dynamic games. The solution concept adopted in most of the literature on dynamic games is ex post Nash equilibrium or variants thereof. With the exception of certain special cases such as in \cite{Ma_Kumar}, to the best of our knowledge, we are unaware of any other work that tries to surpass Nash equilibrium or its variants and implement truth-telling in stronger notions of equilibria for broad classes of repeated or dynamic games. A generously disposed view of the present paper could be as an attempt in that direction.

\section{Conclusion}\label{conclusion}
We have considered two-stage repeated stochastic games wherein private information is revealed over two stages and the social planner is constrained to make a decision in each stage. The setting models many important problems that arise in next-generation electricity markets. Recognizing the limitation of Nash equilibria in molding real-world behavior, we have introduced the notion of a dominant strategy non-bankrupting equilibrium which requires the players to make very little assumptions about the behaviors of the other players to employ their equilibrium strategy. Consequently, a mechanism that implements a certain desired behavior as a dominant strategy non-bankrupting equilibrium could effectively mold real-world behavior along the desired lines. We have developed a mechanism for two-stage repeated stochastic games that implements truth-telling as a DNBE. The mechanism is also individually rational and maximizes social welfare.

\bibliographystyle{IEEEtran}
\bibliography{references.bib}

\end{document}